\definecolor{Gray}{gray}{0.9}
\definecolor{LightCyan}{rgb}{0.88,1,1}
\newcommand{\Expect}{{\rm I\kern-.3em E}}
\newtheorem{theorem}{Theorem}[section]   
\newtheorem{proposition}[theorem]{Proposition}  
\newtheorem{definition}[theorem]{Definition}   
\newtheorem{remark}[theorem]{Remark}        
\newtheorem{example}[theorem]{Example}        
\title{Quantifying the Total Effect of Edge Interventions in Discrete Multistate Networks}
\author{David Murrugarra$^{a}$ \and Elena Dimitrova$^{b}$}
\begin{document}
\maketitle

{\footnotesize
     \centerline{$^a$Department of Mathematics,
      University of Kentucky,}
  \centerline{Lexington, KY 40506-0027, USA}
}
{\footnotesize
     \centerline{$^b$Department of Mathematics,
      California Polytechnic}
  \centerline{State University, San Luis Obispo, CA 93407-0403, USA}
}
\begin{abstract}
Developing efficient computational methods to assess the impact of external interventions on the dynamics of a network model is an important problem in systems biology. This paper focuses on quantifying the global changes that result from the application of an intervention to produce a desired effect, which we define as the \emph{total effect} of the intervention. The type of mathematical models that we will consider are discrete dynamical systems which include the widely used Boolean networks and their generalizations. The potential interventions can be represented by a set of nodes and edges that can be manipulated to produce a desired effect on the system. We use a class of regulatory rules called nested canalizing functions that frequently appear in published models and were inspired by the concept of canalization in evolutionary biology. In this paper, we provide a polynomial normal form based on the canalizing properties of regulatory functions. Using this polynomial normal form, we give a set of formulas for counting the maximum number of transitions that will change in the state space upon an edge deletion in the wiring diagram. These formulas rely on the canalizing structure of the target function since the number of changed transitions depends on the canalizing layer that includes the input to be deleted. We also present computations on random networks to compare the exact number of changes with the upper bounds provided by our formulas. Finally, we provide statistics on the sharpness of these upper bounds in random networks.
\end{abstract}
\section{Introduction}
Boolean networks (BN) have been proposed as an appropriate framework for modeling the state of cells due to their simplicity and the variety of tools available for model analysis~\cite{Veliz-Cuba:2010aa,Zanudo:2015aa}. However, some complex gene interactions cannot be represented in the Boolean setting and several generalizations of the Boolean approach have been developed~\cite{Thomas:1990aa}. Multistate models, a generalization of the BN framework, where the genes can attain more than two states have been proposed as appropriate models for capturing complex gene expression patterns, such as consideration of three states (low, medium, and high). We note that while in theory it is possible to develop models where the variables can take on any number of possible states (possibly only restricted by the requirement that it needs to be a power of a prime number so that the domain can have the structure of a finite field), in practical applications this number is typically small -- rarely larger than 5.

A Gene Regulatory Network (GRN) is a representation of the intricate relationships among genes, proteins, and other compounds that are responsible for the expression levels of mRNA and proteins.
Boolean networks have been successfully used to model and study the properties of GRN~\cite{Albert:2003aa,Li:2004aa}. In particular, Boolean canalizing rules were introduced by S.~Kauffman and collaborators~\cite{Kauffman2003,Kauffman:2004aa} and reflect the concept of canalization in evolutionary biology that Waddington pioneered in 1942~\cite{Waddington:1957aa} -- that organisms evolve developmental robustness, producing an invariant phenotype even under genetic or environmental perturbations.

In this article, we study the network-wide effect of an experimental intervention that either prevents a regulation from happening or silences a node. Such intervention is modeled through edge deletion and can be achieved via therapeutic drugs that target a specific gene interaction~\cite{Choi2012,Campbell:2019aa}. In~\cite{Murrugarra:2015uq} we introduced methods for quantifying side effects in Boolean networks. However, many of the more recently published discrete dynamical models include variables that take on more than two states due to the need for capturing mechanisms that are not binary in nature~\cite{zanudo2017network,Remy_2015,Chifman:2017aa,Espinosa-Soto:2004aa,Thieffry:1995aa}. Consequently, Boolean nested and partially nested canalizing functions were generalized to multistate~\cite{Murrugarra:2011aa,Murrugarra:2012aa,kadelka2017multistate} which enables the possibility of capturing more complex interactions among the genes in the network. Such functions can be viewed as a discrete dynamical system with a stratified structure which consists of hierarchical layers of variables according to their relative influence over the system dynamics.

The ability to quantify the global changes in the dynamics of the network after an external perturbation has important applications. In the presence of external network modifications where the topology of the network changes but the attractor structure remains unchanged, it is still desirable to quantify the changes in other aspects of the dynamics such as the transient time. For instance,  in evolutionary biology to simulate evolution one often evolves an ensemble of networks (by performing mutations, crossover, selection, etc.) for many generations~\cite{Wagner1996}. At the end of the simulations, one measures the changes in the evolved networks to compare with the features of the original ensemble. In~\cite{Wagner1996}, after simulated evolution, the evolved networks had similar features to the original ones such as the number of attractors and basin sizes. One feature that had changed is the transient time~\cite{Wagner1996}. The theoretical tools presented in this paper will be useful to measure global changes even if the attractor structure is preserved after an intervention.

There are several published control methods for Boolean networks such as Stable Motifs~\cite{Zanudo:2015aa}, Feedback Vertex Sets~\cite{Zanudo:2017wq}, Minimal Hitting Sets~\cite{Vera-Licona:2013aa,Klamt:2006mi}, and several others~\cite{Qiu:2014aa,Li:2015aa,PMID:25433558,gates2016control,zanudo2017network,Sordo-Vieira:2019aa}. 
While these control methods focus on finding control
interventions that satisfy a control objective (e.g., to drive the
system into a specific attractor), there are very few studies of the
total extent of the consequences of applying a certain control action
(beyond satisfying the control objective).
This paper contributes methods for measuring the impact of the control actions on the dynamics of multistate networks. The type of theoretical tools presented here can help to discriminate control actions with minimal effect on the state space. That is, even if we have different control candidates that can achieve a certain objective, they might have different impact on the dynamics of the network and we might be interested in distinguishing the control action that produces the least changes in the dynamics of the network.

The rest of the paper is structured as follows. In Section~\ref{sec:Background}, we introduce discrete dynamical systems and their representation as polynomial dynamical systems. In Section~\ref{sec:Methods} we define the control actions for multistate networks. In Section~\ref{sec:Results} we provide a polynomial normal form for discrete functions and then we use this representation to derive a set of formulas for counting the maximum number of transitions in the state space upon edge deletions. In Section~\ref{sec:Applications} we present computational results for random networks. Finally, in Section~\ref{sec:Conlusions}, we provide the conclusions of the paper.
\section{Background}\label{sec:Background}
A discrete dynamical system can be defined as a dynamical system that is discrete in time as well as 
in variable states. More formally,
consider a collection $x_1, \ldots , x_n$ of
variables, each of which can take on values in finite sets $X_1,\dots,X_n$.
Let $X = X_1\times\dots\times X_n$ be their Cartesian product.
A discrete dynamical system in the variables $x_1, \ldots , x_n$ is a function 
\begin{displaymath}
\mathbf{ F} = (f_1,\dots,f_n):X\rightarrow X
\end{displaymath}
where each coordinate function $f_i$ is a discrete function on a subset of $\{x_1,\dots,x_n\}$ which represents how the future value of the $i$-th variable depends on the present values of the variables. If $X_i=\{0,1\}$, then each $f_i$ is a Boolean rule and $\mathbf{ F}$ is a Boolean network. 

In this article, for the purpose of exploiting the algebraic properties of discrete functions, it is assumed that the variables $x_1, \ldots , x_n$ take on values from a finite field $\mathbb{F}$. Then using the fact that any discrete function $f_i:\mathbb{F}^n\rightarrow \mathbb{F}$ can be represented as a polynomial in $x_1,\dots,x_n$, that is $f_i\in\mathbb{F}[x_1,\dots,x_n]$,
the discrete network can be represented as $\mathbf{ F} = (f_1,\dots,f_n):\mathbb{F}^n\rightarrow \mathbb{F}^n$
where each $f_i\in\mathbb{F}[x_1,\dots,x_n]$. If any of the variables $x_1, \ldots , x_n$ take on values from a set that cannot be directly identified with a finite field, then it is straightforward to embed the system $\mathbf{F}:X\rightarrow X$ into a system $\hat{\mathbf{F}}:\mathbb{F}^n\rightarrow \mathbb{F}^n$, where $X\subset \mathbb{F}^n$, while preserving the attractor structure of $\mathbf{F}$; see~\cite{Veliz-Cuba:2010aa}.

Given a discrete network $\mathbf{ F} = (f_1,\dots,f_n)$, a directed graph $\mathcal{W}$ on $n$ nodes 
$x_1, \ldots , x_n$ is associated to $\mathbf{ F}$ as follows: there is a directed edge in $\mathcal{W}$ from 
$x_j$ to $x_i$ if $x_j$ appears in $f_i$, i.e. $x_j$ is in the \emph{support} of $f_i$, written $x_j\in supp(f_i)$. In the context
 of a molecular network model, this graph represents the wiring diagram of the network.

The dynamics of a discrete network is given by the difference equation $x(t+1)=\mathbf{ F}(x(t))$; that is, the dynamics is generated by iteration of $\mathbf{ F}$. More precisely, the dynamics of $\mathbf{ F}$ is represented by the state space graph $S$, defined as the graph with vertices in
$\mathbb{F}^n$ which has an edge from $x\in \mathbb{F}^n$ to $y\in \mathbb{F}^n$ if and only 
if $y = \mathbf{ F}(x)$. In this context, the problem of finding the states
$x\in \mathbb{F}^n$ where the system will get stabilized is of particular importance. The collection of these special points of the state space are called \emph{attractors} of a discrete network and elements of the attractors may include steady states (fixed points), where $\mathbf{ F}(x) = x$, or cycles, where $\mathbf{ F}^r(x) = x$ for some integer $r>1$. Attractors in network modeling might represent cell types~\cite{Kauffman:1969aa} or cellular states such as apoptosis, proliferation, or cell senescence~\cite{Huang:1999aa,DBLP:books/daglib/0024105}.
\section{Methods}\label{sec:Methods}
Network interventions can be modeled through edge and node manipulations and can be achieved via therapeutic drugs that target a specific gene interaction~\cite{Choi2012,Campbell:2019aa}. 
In~\cite{Murrugarra:2015uq,Murrugarra:2016aa} we provided definitions for these actions in Boolean networks. These definitions are usually used for encoding the control parameters with the purpose of identifying control targets as shown in~\cite{Murrugarra:2016aa}. In this paper we will consider the deletion and constant expression of edges in the multistate setting.

Throughout the paper, $S_{i,j}$ will be a subset of $\mathbb{F}$ and $Q_{i,j}(u)$ will be the indicator functions of $S_{i,j}$. 
That is, they return 1 when $u$ is in the set and 0 when $u$ is not. The index $i$ indicates the node $x_i$ from which the edge begins and the second index $j$ is used when necessary to mark the function under consideration.
\subsection{Edge Control in Multistate Networks}
 
In the Boolean setting, the deletion of an edge was implemented by setting an input to zero so that the interaction of that input (represented by an edge) was being silenced. For the multistate case, the silencing of the interaction will be applied whenever the control variable is within a range of values of the possible discrete values.

\begin{definition}[Edge Deletion]\label{def:edge_del_m}
Consider the edge $x_i\rightarrow x_j$ in a wiring diagram. 
For $u\in S_{i,j}$, the control of the edge $x_i\rightarrow x_j$ consists of manipulating the input variable $x_i$ for $f_j$ in the following way: 
\begin{equation*}
\label{edge_del_def}
\mathcal{F}_j(x,u) = f_j(x_{j_1},\dots,(1-Q_{i,j}(u))x_i,\dots,x_{j_m}).
\end{equation*}
For each value of $u\in\mathbb{F}$ we have the following control settings: 
\begin{itemize}
  \item For $u\in S_{i,j}$, 
  $$\mathcal{F}_j(x,u) = f_j(x_{j_1},\dots,x_i=0,\dots,x_{j_m}).$$
  That is, the control is active and the action represents the removal of the edge $x_i\rightarrow x_j$.
  \item For $u\notin S_{i,j}$, 
  $$\mathcal{F}_j(x,u) = f_j(x_{j_1},\dots,x_i,\dots,x_{j_m}).$$ 
  That is, the control is not active.
\end{itemize}
\end{definition}
We will also consider the constant expression of edges, which we define as follows.
\begin{definition}[Constant expression]\label{def:edge_ce_m}
Consider the edge $x_i\rightarrow x_j$ in a wiring diagram and $a\in\mathbb{F}$. 
For $u\in S_{i,j}$, the control of the edge $x_i\rightarrow x_j$ consists of manipulating the input variable $x_i$ for $f_j$ in the following way: 
\begin{equation}
\begin{array}{l}
\mathcal{F}_j(x,u) = \\
f_j(x_{j_1},\dots,(1-Q_{i,j}(u))x_i+aQ_{i,j}(u),\dots,x_{j_m})\nonumber.
\end{array}
\end{equation}
For each value of $u\in\mathbb{F}$ we have the following settings: 
\begin{itemize}
  \item For $u\in S_{i,j}$, 
  $$\mathcal{F}_j(x,u) = f_j(x_{j_1},\dots,x_i=a,\dots,x_{j_m}).$$ 
  That is, the control is active and the action represents the constant expression (to $a$) of the edge $x_i\rightarrow x_j$.
  \item For $u\notin S_{i,j}$, 
  $$\mathcal{F}_j(x,u) = f_j(x_{j_1},\dots,x_i,\dots,x_{j_m}).$$ 
  That is, the control is not active.
\end{itemize}
\end{definition}
\begin{remark}
Node control of $x_i$ can be defined as setting the function $f_i$ to a constant $a\in\mathbb{F}$.
\end{remark}
\section{Results}\label{sec:Results}
In this section we present a definition of $k$-canalizing functions for the multistate case and then we characterize this functions in terms of layers of canalizations. Subsequently, we use this canalizing layers representation to derive an upper bound for the number of changes in the state space of a discrete system upon an edge deletion in the wiring diagram.
\subsection{Multistate $k$-Canalizing Functions}
In the following definition, we assume that $\sigma$ is a permutation on $\{1,\dots,n\}$. 
\begin{definition}\label{Defn_k_can}
The function $f:\mathbb{F}^n\rightarrow \mathbb{F}$ is a $k$-\emph{canalizing function} in the variable order $x_{\sigma(1)},\dots,x_{\sigma(k)}$ with \emph{canalizing input sets} $S_1,\dots,S_k\subset \mathbb{F}$ and \emph{canalizing output values} $b_1,\dots,b_{k}\in\mathbb{F}$ if it can be represented in the form
\begin{equation}\label{eq:defn_k_can}
\begin{array}{l}
f(x_1,\dots,x_n)=\\ \\
\left\{
\begin{array}{l}
b_1,\text{ if}\ x_{\sigma(1)}\in S_1,\\
b_2,\text{ if}\ x_{\sigma(1)}\notin S_1,x_{\sigma(2)}\in S_2,\\
\vdots\\
b_k,\text{ if}\ x_{\sigma(1)}\notin S_1,\dots,x_{\sigma(k)}\in S_k,\\
g\neq b_{k},\text{ if}\ x_{\sigma(1)}\notin S_1,\dots,x_{\sigma(k)}\notin S_k,
\end{array}\right.
\end{array}
\end{equation}
where $g=g(x_{\sigma(k+1)},\dots,x_{\sigma(n)})$ is a multistate function on $n-k$ variables. When $g$ is not a canalizing function, the integer $k$ is the canalizing \emph{depth} of $f$. If $g$ is not a constant function, then $g$ is called the \emph{core function} of $f$ and is denoted by $P_{C}$. 
\end{definition}

\begin{remark}
Note that in Definition~\ref{Defn_k_can} we require that the function $g$ be unique when all the canalizing variables are not in their corresponding canalizing input sets.
As a result, a function could be canalizing but not $1$-canalizing, see Example~\ref{ex:can_not_1can}.
\end{remark}

\begin{example}\label{ex:can_not_1can}
Let $\mathbb{F}=\{0,1,2\}$ and $n=2$. Consider the function
$$f(x_1,x_2)=1+2 x_1^2+2 x_2+2 x_1^2 x_2+2 x_2^2.$$
For this function $x_2$ is canalizing (with $S_1=\{2\}$) because $f(x_1,2)=1$. However, $f$ is not a $1$-canalizing function because $f(x_1,0)=1+2 x_1^2\neq 2+x_1^2=f(x_1,1)$. Thus, even though $x_2$ is canalizing for $f$ , the function $f$ has no layers of canalization. Thus, $P_C=f$.
\end{example}

\subsection{Layers of canalization in multistate networks}
In Theorem~\ref{thm:layers_multi} we provide a polynomial normal description of discrete functions. Basically, this theorem gives a partition of the inputs of the function into canalizing and non-canalizing variables and, within the canalizing ones, we categorize the input variables into layers of canalization.
This theorem is a generalization of a theorem in~\cite{he2016stratification} from Boolean to the multistate case.

Let $S\subset\mathbb{F}$ be a subset of $\mathbb{F}$ 
and $\widetilde{Q}_{S}(u)$ be the indicator function of the complement of $S$. That is,
\begin{displaymath}
\widetilde{Q}_{S}(x)=\biggl\{\begin{array}{ll}
1&\text{if}\ x\notin S,\\
0&\text{if}\ x\in S.
\end{array}
\end{displaymath}

\begin{theorem}\label{thm:layers_multi}
Every multistate function can be uniquely written as
\begin{equation}
\label{eq:layers_multistate}
\begin{array}{l}
f(x_1,\dots,x_n) =
M_1(M_2(\dots(M_{r-1}(M_rP_C+\\\\
\quad\quad\quad B_r)+B_{r-1})\dots)+B_2)+B_1,\\
\end{array}
\end{equation}
where $M_i = \displaystyle\prod_{j=1}^{k_i}\widetilde{Q}_{S_{i,j}}$, $d = k_1+\cdots+ k_r$ is the canalizing depth,
$P_C$ is a polynomial that has no canalizing variables,
$B_1,B_2,\dots,B_{r}\in\mathbb{F}$, and $B_{r}\neq0$.
Each variable $x_i$ appears in exactly one of the $M_1,M_2,\dots,M_r,P_C$.
\end{theorem}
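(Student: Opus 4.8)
The plan is to prove the theorem by induction on the number of canalizing variables of $f$, peeling off one full layer of canalization at a time. The base case is when $f$ has no canalizing variable at all: then $r=0$ is vacuous, $P_C = f$, and there is nothing to prove. For the inductive step, suppose $f$ has at least one canalizing variable. First I would show that the set of canalizing variables, together with their canalizing input sets and output values, is forced — i.e. if $x_i$ is canalizing with input set $S$ and output $b$, then $f(x)=b$ whenever $x_i\in S$, and this data is uniquely determined by $f$. Grouping all canalizing variables that share the \emph{same} canalizing output value $b_1$ (the one "seen first'') gives the first layer: these are the variables appearing in $M_1 = \prod_j \widetilde{Q}_{S_{1,j}}$. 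When any one of them lands in its canalizing set, $M_1 = 0$ and $f = B_1$; otherwise $M_1 = 1$ and $f = f' + B_1$ for a function $f'$ not depending on the layer-$1$ variables. Here $B_1 = b_1$ if the output value $b_1$ actually occurs, and we must check the remaining function $f' = f - B_1$ genuinely has those layer-$1$ variables removed from its support and that $B_1$ is the unique constant making this work.

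The second step is to apply the induction hypothesis to $f'$, which has strictly fewer canalizing variables (the layer-$1$ variables are gone from its support, and no new canalizing variables are created — this needs a short argument). Writing $f' = M_2(M_3(\cdots) + B_2)$ in the claimed form and substituting back yields the nested expression \eqref{eq:layers_multistate} for $f$. The requirement $B_r \neq 0$ is handled by the convention that we stop peeling exactly when the innermost remaining function $P_C$ has no canalizing variables: if the last extracted constant were $0$, the corresponding $M_r$ factor would merge with $P_C$ (since $M_r \cdot P_C$ would then be the whole inner function with its canalizing variables), contradicting that $P_C$ has no canalizing variables — so $B_r \neq 0$ is automatic, or more precisely is what distinguishes a genuine final layer from a spurious one. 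Finally, the disjointness claim — each $x_i$ appears in exactly one of $M_1,\dots,M_r,P_C$ — follows because the layer-$j$ variables are precisely those removed from the support when passing from the depth-$(j-1)$ remainder to the depth-$j$ remainder, so the layers partition the canalizing variables, and $P_C$ collects the rest.

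The main obstacle I anticipate is the \emph{uniqueness} half of the statement, and within it the subtlety flagged by Example~\ref{ex:can_not_1can}: in the multistate setting a variable can be canalizing without the function being $1$-canalizing, because the "non-canalizing branch'' $g$ in Definition~\ref{Defn_k_can} is required to be a single well-defined function independent of which canalizing variable failed. So the delicate point is to show that the partition of the variables into layers, and the constants $B_i$ and input sets $S_{i,j}$, are not merely \emph{a} valid decomposition but the \emph{only} one — in particular that one cannot trade a variable between layer $j$ and a later layer, and cannot absorb a nonzero $B_i$ differently. I would establish this by characterizing each layer intrinsically in terms of $f$: layer $1$ consists of exactly the variables $x_i$ such that there is a value $b_1$ and a set $S$ with $f \equiv b_1$ on $\{x_i \in S\}$ and with $b_1$ common to all such variables; and then arguing inductively that the remaining layers are determined by the same property applied to $f - B_1$ on the complementary region $M_1 = 1$. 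The Boolean analogue in~\cite{he2016stratification} gives the template; the extra care needed here is purely in tracking that "indicator of the complement of $S_{i,j}$'' behaves over $\mathbb{F}$ the way $(1-x)$ does over $\{0,1\}$, so that the algebra of the nested product still collapses correctly on each region.
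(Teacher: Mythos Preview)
Your approach is essentially the same as the paper's: both peel off one canalizing layer at a time and recurse on the inner function. The paper inducts on the number of essential variables rather than on the canalizing depth, but the mechanics are identical---identify the variables on which $f$ is $1$-canalizing, write $f = M_1 g + B_1$, and apply the inductive hypothesis to $g$.

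Two remarks are worth making. First, you go further than the paper by explicitly tackling uniqueness; the paper's proof establishes only existence, despite the word ``uniquely'' in the theorem, so your intrinsic characterization of each layer (and your handling of $B_r\neq 0$) is a genuine addition rather than a deviation. Second, your phrase ``no new canalizing variables are created'' when passing from $f$ to $f'$ needs care: the layer-$2$ variables of $f$ \emph{do} become top-level canalizing variables of $f'$, so the count of top-level canalizing variables can go up. What strictly decreases is the total canalizing \emph{depth} (equivalently, the number of essential variables remaining), and that is the correct induction invariant---just make sure this is what you are inducting on, and that the depth is well-defined via Definition~\ref{Defn_k_can} independently of the decomposition you are constructing.
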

\begin{proof}
If $f(x_1,\dots,x_n)$ is non-canalizing, then $P_C=f$.
If $f(x_1,\dots,x_n)$ is canalizing, then we proceed by induction.
For $n=1$, if $f$ is canalizing in $x_i$ but not $1$-canalizing in $x_i$, then we set $P_C=f$. If $f$ is $1$-canalizing in $x_i$, then it can be written as $f=\widetilde{Q}_{S_{1}}(x_i)+B_1$ for some set $S_1\subset\mathbb{F}$. Then $f$ has the form of Equation~\ref{eq:layers_multistate} by setting $M_1=\widetilde{Q}_{S_{1}}(x_i)$ and $P_C=1$. For $n=2$, if $f(x_i,x_j)$ is not $1$-canalizing on any of its variables, then we set $P_C=f$. If $f$ is $1$-canalizing on $x_i$, then $f$ can be written as $f(x_i,x_j)=M_1(x_i)g(x_j)+B_1$ for some $g(x_j)$. Then $f$ has the form of Equation~\ref{eq:layers_multistate} by setting $P_C=g$. Now assume that Equation~\ref{eq:layers_multistate} is true for any canalizing function that is essential in at most $n-1$ variables (that is, for all functions that depend in at most $n-1$ variables). Let $f$ be a function that is essential in $n$ variables. If $f$ is not $1$-canalizing on any of its variables, then we set $P_C=f$. If $f$ is $1$-canalizing in $x_{i_1},\dots,x_{i_{k_1}}$, then $f=M_1g+B_1$, where $M_1$ is the product of indicator functions of the complements of sets $S_{i_1},\dots,S_{i_{k_1}}\subset\mathbb{F}$ and $g$ has $n-k_1$ variables. If $g$ has no canalizing variables, then $f$ has the form of Equation~\ref{eq:layers_multistate} with $P_C=g$. If $g$ is canalizing, then by the inductive hypothesis $g$ can be written as
\begin{displaymath}
g = M_2(\dots(M_{r-1}(M_rP_C+B_r)+B_{r-1})\dots)+B_2.
\end{displaymath}
Thus, $f$ has the form of Equation~\ref{eq:layers_multistate}.
\end{proof}
\begin{remark}
\item For a multistate nested canalizing function, the formula in Equation~\ref{eq:layers_multistate} reduces to
\begin{equation}
\label{eq:layers_NCF_multistate}
\begin{array}{l}
f(x_1,\dots,x_n) = M_1(M_2(\dots(M_{r-1}(B_{r+1}M_r+\\
\quad B_r)+B_{r-1})\dots)+B_2)+B_1,
\end{array}
\end{equation}
as was shown in~\cite{kadelka2017multistate}.
\end{remark}

In the following example we describe a $2$-canalizing function with noncanalizing variables.

\begin{example}
Let $\mathbb{F}=\{0,1,2\}$ and $n=4$. Consider the function 
$$
\begin{array}{l}
f(x_1,x_2,x_3,x_4)=1+x_1^2+x_1^2 x_2+2 x_1^2 x_2^2+\\
x_1^2 x_2 x_3+2 x_1^2 x_2^2 x_3+x_1^2 x_2 x_4+2 x_1^2 x_2^2 x_4.
\end{array}
$$ 
The function $f$ can be written as in Equation~\ref{eq:layers_multistate} as  
$$f=M_1(M_2(P_C+1)+1)+1),$$ 
where $M_1=\widetilde{Q}_{S_1}(x_1)=x_1^2$, $S_1=\{0\}$, $M_2=\widetilde{Q}_{S_2}(x_2)=x_2+2 x_2^2$, $S_2=\{0,1\}$, and $P_C=x_3+x_4$. Thus $f$ has two layers and two noncanalizing variables.
Note that $f$ can also be written as in Equation~\ref{eq:defn_k_can} as
\begin{displaymath}
\begin{array}{l}
f(x_1,x_2,x_3,x_4)=\\\\
\left\{
\begin{array}{l}
1,\text{ if}\ x_{1}\in S_1=\{0\},\\
2,\text{ if}\ x_{1}\notin S_1,x_{2}\in S_2=\{0,1\},\\
P_C,\text{ if}\ x_{1}\notin S_1,x_{2}\notin S_2.
\end{array}\right.
\end{array}
\end{displaymath}
\end{example}
\begin{figure}
    \centering
    \includegraphics[width=.5\textwidth]{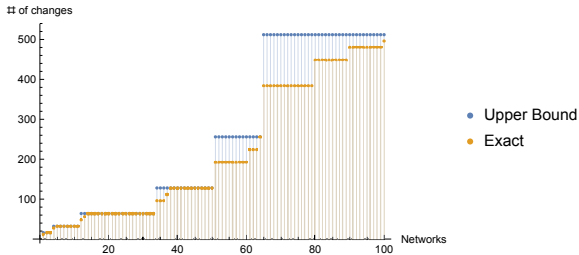}
    \caption{Statistics for the number of changes in the first layer of scale-free Boolean networks. The $x$-axis shows the 100 networks that were randomly generated and the $y$-axis shows the number of changes corresponding to a network in the $x$-axis. In Figure~\ref{fig:errors_layer1_Boolean} we plot the differences between upper bounds and the exact number of changes for these networks.}
    \label{fig:layer1_Boolean}
\end{figure}
\subsection{Upper bounds}
Using the polynomial normal form of multistate functions in Theorem~\ref{thm:layers_multi}, we derive a set of formulas for counting the maximum number of transitions that will change in the state space upon an edge deletion in the wiring diagram. The formulas presented here are generalizations from the Boolean case to the multistate setting of the formulas we presented in~\cite{Murrugarra:2015uq}.

For the next theorem, we are going to assume that the functions of the discrete network $\mathbf{F}=(f_1,\ldots,f_n):\mathbb{F}^n\to \mathbb{F}^n$ are written in the format of Theorem~\ref{thm:layers_multi}. That is, for $t=1,\dots,n$ the coordinate function $f_t$ has the following form,
\begin{equation}\label{eq:layers_del_bound}
\begin{array}{l}
f_t(x_1,\dots,x_n) = M_1^t(M_2^t(\dots(M_{r-1}^t(M_r^tP_C+\\\\
\quad\quad\quad\quad B_r)+B_{r-1})\dots)+B_2)+B_1,
\end{array}
\end{equation}
where $M^t_i = \displaystyle\prod_{j=1}^{k_i}\widetilde{Q}_{S_{j,t}}$, $d = k_1+\cdots+ k_r$ is the canalizing depth,
$P_C$ is a polynomial with no canalizing variables,
$B_1,B_2,\dots,B_{r}\in\mathbb{F}$, and $B_{r}\neq0$.
Each variable $x_i$ appears in exactly one of $M_1^t,M_2^t,\dots,M_r^t,P_C$.
\begin{remark}
Note that the function $f_t$ has $r$ layers and there are $k_i$ variables in each layer for $i=1,\dots,r$. 
\end{remark}
In the following theorem, we assume that the canalizing input sets are all of the same size for all the variables. In Theorem~\ref{thm:multistate_bound} we study the general case where the canalizing input sets of the variables can be different.
\begin{theorem}\label{thm:same_input_set}
Let $\mathbf{F}=(f_1,\ldots,f_n):\mathbb{F}^n\to \mathbb{F}^n$ be a multistate network where $f_t$ is a $k$-canalizing function written as in Eq.~\ref{eq:layers_del_bound} with $k_1,\ldots, k_r$ the numbers of variables in layers $1,\ldots, r$, respectively. Let $x_s$ be in the $\ell^{th}$ layer, where $\ell\le r$ and $r$ is the number of layers. Suppose that all canalizing variables have the same canalizing input set $S$ and that $0\in S$. Then, the maximum number of transitions in the state space that will change upon deletion of $x_s\to x_t$ is given by
\begin{equation}\label{eq:bound_same_inputs}
\begin{array}{l}
p^{n-k_1-\cdots-k_\ell}\left( p-\left| S\right| \right)^{k_1+\cdots+k_\ell}.
\end{array}
\end{equation}
\end{theorem}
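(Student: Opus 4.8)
The plan is to reduce the count of changed transitions to the number of states on which the modified target function disagrees with the original one, and then to exploit the telescoping structure of the normal form~(\ref{eq:layers_del_bound}). Deleting the edge $x_s\to x_t$ affects only the coordinate function $f_t$, replacing it by $\widetilde{f}_t(x):=f_t(x_1,\dots,x_{s-1},0,x_{s+1},\dots,x_n)$ and leaving every other $f_i$ untouched (Definition~\ref{def:edge_del_m}). Hence the transition out of a state $x$ changes precisely when $f_t(x)\neq\widetilde{f}_t(x)$, so the number of changed transitions is $\bigl|\{x\in\mathbb{F}^n:f_t(x)\neq\widetilde{f}_t(x)\}\bigr|$, and it suffices to bound the size of this set.

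By Theorem~\ref{thm:layers_multi} each variable occurs in exactly one of $M_1^t,\dots,M_r^t,P_C$; since $x_s$ lies in the $\ell$-th layer, it appears only inside $M_\ell^t$, through the factor $\widetilde{Q}_S(x_s)$. Because $0\in S$, we have $\widetilde{Q}_S(0)=0$, so the substitution $x_s=0$ makes $M_\ell^t$ vanish identically, while every other $M_i^t$ and $P_C$ is unchanged. Reading~(\ref{eq:layers_del_bound}) from the inside out, set $G_{r+1}=P_C$ and $G_i=M_i^tG_{i+1}+B_i$ for $i=r,\dots,1$, so that $f_t=G_1$ and $G_{\ell+1}$ is exactly the subexpression built from layers $\ell+1,\dots,r$ and $P_C$. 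Since multiplication by the unchanged $M_1^t,\dots,M_{\ell-1}^t$ distributes over the substitution and $\widetilde{G}_\ell=(M_\ell^t|_{x_s=0})G_{\ell+1}+B_\ell=B_\ell$, a short telescoping computation gives
\begin{equation*}
f_t(x)-\widetilde{f}_t(x)=M_1^t(x)\,M_2^t(x)\cdots M_\ell^t(x)\,G_{\ell+1}(x).
\end{equation*}

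Each $M_i^t$ is a product of indicator functions and hence takes values in $\{0,1\}$, so $f_t(x)\neq\widetilde{f}_t(x)$ forces $M_1^t(x)=\cdots=M_\ell^t(x)=1$; that is, every variable occurring in layers $1,\dots,\ell$ must take a value outside $S$. There are $k_1+\cdots+k_\ell$ such variables, each with $p-|S|$ admissible values, while the remaining $n-(k_1+\cdots+k_\ell)$ variables are unconstrained; consequently
\begin{equation*}
\bigl|\{x:f_t(x)\neq\widetilde{f}_t(x)\}\bigr|\le p^{\,n-k_1-\cdots-k_\ell}\,(p-|S|)^{k_1+\cdots+k_\ell},
\end{equation*}
which is~(\ref{eq:bound_same_inputs}). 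The inequality may be strict, since $f_t(x)\neq\widetilde{f}_t(x)$ also requires $G_{\ell+1}(x)\neq0$, which can fail on part of the set above -- this is why the formula is an upper bound rather than an identity.

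The step I expect to be the main obstacle is making the telescoping identity watertight: one has to verify, layer by layer, that the substitution $x_s=0$ commutes with the nested algebra of~(\ref{eq:layers_del_bound}) and collapses everything from layer $\ell$ inward to the constant $B_\ell$, so that $f_t-\widetilde{f}_t$ genuinely factors through $M_1^t\cdots M_\ell^t$. After that, the argument is elementary counting. It is also worth recording the degenerate cases -- $\ell=r$ with $P_C$ constant, so that~(\ref{eq:layers_del_bound}) specializes to~(\ref{eq:layers_NCF_multistate}), and $\ell=1$ -- in which the same argument goes through verbatim with $G_{\ell+1}$ read off directly.
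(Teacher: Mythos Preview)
Your proof is correct and rests on the same key observation as the paper's: a transition can change only at states where every canalizing variable in layers $1,\dots,\ell$ takes a non-canalizing value, and counting those states gives~(\ref{eq:bound_same_inputs}). The paper argues this directly in words---if any earlier variable canalizes then $f_t$ no longer sees $x_s$, and if $x_s\in S$ then replacing $x_s$ by $0\in S$ does nothing---whereas you make it algebraically explicit by computing the telescoped difference $f_t-\widetilde{f}_t=M_1^t\cdots M_\ell^t\,G_{\ell+1}$ and reading off the support. Your route is slightly longer but more transparent: it pins down exactly why the bound can be strict (namely $G_{\ell+1}(x)=0$ on some of the admissible states), a point the paper leaves implicit. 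The ``main obstacle'' you flag is not really one---the telescoping is immediate once you note that $x_s$ appears only in $M_\ell^t$ and that the outer layers $M_1^t,\dots,M_{\ell-1}^t$ are untouched by the substitution, so $G_i-\widetilde G_i=M_i^t(G_{i+1}-\widetilde G_{i+1})$ for $i<\ell$.
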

\begin{proof}
Let $m=k_1+\cdots+k_\ell$. The number of input vectors where the other  canalizing variables (not $x_s$) of $f_t$ do not take on their canalizing input is $\left( p-\left| S\right| \right)^{m-1}$. For these input vectors, if $x_s$ was already set to 0 or to any other of its canalizing values in $S$, then the output of $f_t$ will not change as a result of deleting $x_s\to x_t$. Finally, since we have $n-m$ non-canalizing variables, the total number of input vectors for which the output of $f_t$ can possibly change is $\left( p-\left| S\right| \right)^{m-1}(p-\left| S\right|)p^{n-m}=\left( p-\left| S\right| \right)^{m}p^{n-m}$.
\end{proof}
\begin{remark}
Note that from Equation~\ref{eq:bound_same_inputs} that the number of variables in each layer affects the number of changes and that there are potentially more changes when the deletion happens in a more dominant layer, see examples~\ref{ex:Boolean_case}-\ref{ex:Multistate_case}.
\end{remark}
\begin{theorem}\label{thm:multistate_bound}
Let $\mathbf{F}=(f_1,\ldots,f_n):\mathbb{F}^n\to \mathbb{F}^n$ be a multistate network where $f_t$ is a $k$-canalizing function written as in Eq.~\ref{eq:layers_del_bound} with $k_1,\ldots, k_r$ the numbers of variables in layers $1,\ldots, r$, respectively.
Let $x_s$ be in the $\ell^{th}$ layer, $\ell\le r$ and $r$ is the number of layers. The maximum number of transitions in the state space that will change upon deletion of $x_s\to x_t$ is given by
\begin{equation}\label{eq:formula_bound}
\begin{array}{l}
p^{n-k_1-\cdots-k_\ell}\cdot\left(\displaystyle\prod_{i=1}^{\ell-1}~~\prod_{j=1}^{k_i}(p-|S_{j,t}|) \right)\\
\left(\displaystyle\prod_{\begin{subarray}{c} j=1\\ j\neq s \end{subarray}}^{k_\ell}(p-|S_{j,t}|)\right)(p-R),
\end{array}
\end{equation}
where 
\begin{displaymath}
R = \left\{ \begin{array}{cc}
 |S_{s,t}| & \textrm{if $0\in S_{s,t}$}\\
 p-|S_{s,t}| & \textrm{if $0\notin S_{s,t}$.}
  \end{array} \right.
\end{displaymath}
\end{theorem}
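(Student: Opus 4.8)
The plan is to count the input vectors $x\in\mathbb{F}^n$ for which $f_t(x)$ differs from $f_t$ evaluated with $x_s$ replaced by $0$; by Definition~\ref{def:edge_del_m} this is exactly the effect of deleting $x_s\to x_t$, and since only the coordinate $f_t$ is altered and every state of the state space has a unique out‑transition $x\mapsto\mathbf{F}(x)$, this count equals the number of changed transitions. Write $m=k_1+\cdots+k_\ell$ and $p=|\mathbb{F}|$.

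First I would isolate when the output can depend on $x_s$. Using the nested form of Eq.~\ref{eq:layers_del_bound} and the fact that each indicator product $M_i^t$ takes values in $\{0,1\}$, observe that if some canalizing variable in a layer $i<\ell$ lies in its set $S_{j,t}$ (so $M_i^t=0$), then $f_t$ equals a constant determined by layer $i$, independent of $x_s$; likewise, if some layer‑$\ell$ variable other than $x_s$ lies in its set, then $M_\ell^t=0$ regardless of $x_s$. Hence a change is possible only when all $k_1+\cdots+k_{\ell-1}$ canalizing variables of the earlier layers and all $k_\ell-1$ layer‑$\ell$ variables other than $x_s$ take values outside their canalizing sets, which accounts for the factors $\prod_{i=1}^{\ell-1}\prod_{j=1}^{k_i}(p-|S_{j,t}|)$ and $\prod_{j=1,\,j\neq s}^{k_\ell}(p-|S_{j,t}|)$. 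The remaining $n-m$ variables (the deeper‑layer and the non‑canalizing variables) are arbitrary, giving the factor $p^{n-m}$. Crucially, because each variable appears in exactly one of $M_1^t,\dots,M_r^t,P_C$, the variable $x_s$ does not occur beyond $M_\ell^t$, so on such a ``relevant'' configuration $f_t$ depends on $x_s$ only through whether $x_s\in S_{s,t}$: if $x_s\in S_{s,t}$ it takes the layer‑$\ell$ canalized value, and if $x_s\notin S_{s,t}$ it takes the single ``deeper'' value (the same for every value outside $S_{s,t}$).

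Next I would split on whether $0\in S_{s,t}$. If $0\in S_{s,t}$, setting $x_s=0$ produces the canalized value, which already equals $f_t(x)$ whenever the original $x_s$ was in $S_{s,t}$; a change can occur only for the $p-|S_{s,t}|$ original values of $x_s$ outside $S_{s,t}$, so the $x_s$‑factor is $p-R$ with $R=|S_{s,t}|$. If $0\notin S_{s,t}$, setting $x_s=0$ produces the deeper value, which already equals $f_t(x)$ whenever the original $x_s$ was outside $S_{s,t}$; a change can occur only for the $|S_{s,t}|$ original values inside $S_{s,t}$, so the $x_s$‑factor is $|S_{s,t}|=p-R$ with $R=p-|S_{s,t}|$. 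Multiplying the four factors gives Eq.~\ref{eq:formula_bound}.

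Finally I would stress that this is genuinely an upper bound: on a relevant configuration the layer‑$\ell$ canalized value and the deeper value may coincide, and the deeper/non‑canalizing part $M_{\ell+1}^t(\cdots)+\cdots$ (including $P_C$) may equal the canalized value for some of the $p^{n-m}$ settings of the free variables, so not every counted input vector actually changes. The main obstacle is the bookkeeping in the structural step — verifying rigorously, from the nested form, that ``$f_t$ depends on $x_s$'' is equivalent to the stated layer conditions and that outside $S_{s,t}$ the output is constant in $x_s$ — together with presenting the $0\in S_{s,t}$ versus $0\notin S_{s,t}$ dichotomy uniformly as the single factor $p-R$.
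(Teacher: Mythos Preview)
Your proposal is correct and follows essentially the same counting strategy as the paper: isolate the inputs where all canalizing variables in layers $1,\dots,\ell-1$ and all layer-$\ell$ variables other than $x_s$ lie outside their canalizing sets, multiply by $p^{n-m}$ for the free variables, and then split on whether $0\in S_{s,t}$ to obtain the factor $p-R$. If anything, your write-up is more careful than the paper's --- you explain explicitly why this count equals the number of changed transitions, why $f_t$ depends on $x_s$ only through the indicator $\widetilde{Q}_{S_{s,t}}(x_s)$, and why the expression is merely an upper bound rather than an equality.
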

\begin{proof}
The strategy is to first count the number of inputs that do not contain values from the canalizing sets of the variables in the first $\ell-1$ layers (that do not contain $x_s$). Thus, the term in the first line of Equation~\ref{eq:formula_bound_proof} counts the number of non-canalizing inputs in the previous layers to the layer containing $x_s$; the term inside the second set of parentheses of Equation~\ref{eq:formula_bound_proof} counts the number of non-canalizing inputs of the variables (except of $x_s$) in the layer containing $x_s$; the last term in Equation~\ref{eq:formula_bound_proof} counts the number of non-canalizing inputs of $x_s$. For the last term, notice that deleting $x_s\to x_t$ results in setting $x_s=0$ in $f_t$. If 0 is in the canalizing set of $x_s$, $S_{s,t}$, then the rest of the values in $S_{s,t}$ will yield the same output as 0. Since $|S_{s,t}|/p$ of the input values in the transition table of $f_t$ contain a canalizing value for $x_s$, it is the remaining $\frac{p-|S_{s,t}|}{p}$ of the table that can potentially change as a result of the edge deletion. On the other hand, if $0\notin S_{s,t}$, then it is the inputs not in $S_{s,t}$ that have the potential to change the output as a result of deleting $x_s\to x_t$ which constitutes $1/p$ of the transition table, with $\frac{p-1}{p}$ of the table that can potentially change as a result of the edge deletion. Thus, to obtain Equation~\ref{eq:formula_bound} we multiply the following expressions:
\begin{equation}\label{eq:formula_bound_proof}
\begin{array}{l}
\frac{p^{n}}{p^{k_1+\cdots+k_{\ell-1}}}\cdot\left(\displaystyle\prod_{i=1}^{\ell-1}~~\prod_{j=1}^{k_i}(p-|S_{j,t}|) \right)\\
\left(\frac{1}{p^{k_\ell-1}}\displaystyle\prod_{\begin{subarray}{c} j=1\\ j\neq s \end{subarray}}^{k_\ell}(p-|S_{j,t}|)\right)\frac{1}{p}(p-R)=\\
p^{n-k_1-\cdots-k_\ell}\cdot\left(\displaystyle\prod_{i=1}^{\ell-1}~~\prod_{j=1}^{k_i}(p-|S_{j,t}|) \right)\\
\left(\displaystyle\prod_{\begin{subarray}{c} j=1\\ j\neq s \end{subarray}}^{k_\ell}(p-|S_{j,t}|)\right)(p-R),
\end{array}
\end{equation}
\end{proof}
\begin{figure}
    \centering
    \includegraphics[width=.5\textwidth]{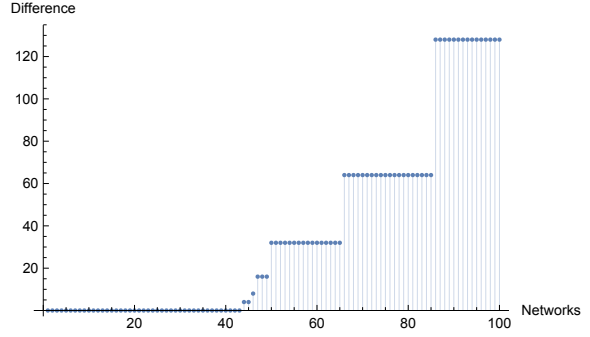}
    \caption{Statistics for the differences between the upper bounds and the exact number of changes for the networks in Figure~\ref{fig:layer1_Boolean}. In about 40\% of the networks the upper bounds match the exact number of changes.}
    \label{fig:errors_layer1_Boolean}
\end{figure}
\begin{remark}
\begin{enumerate}
\item The bound in Equation~\ref{eq:formula_bound} is sharp.
\item When $p=2$, the formula in Equation~\ref{eq:formula_bound} reduces to $2^{n-k_1-k_2-\cdots-k_{r}}$.
\item If instead of edge deletion, we consider constant expression to $a\in\mathbb{F}$ (see Section~\ref{subsec:ce}) of $x_s\to x_t$, then the formula in Equation~\ref{eq:formula_bound} remains the same except for $R$ which becomes
\begin{displaymath}
R = \left\{ \begin{array}{cc}
 |S_{s,t}| & \textrm{if $a\in S_{s,t}$}\\
 p-|S_{s,t}| & \textrm{if $a\notin S_{s,t}$.}
  \end{array} \right.
\end{displaymath}
\end{enumerate}
\end{remark}
\begin{figure}
    \centering
    \includegraphics[width=.5\textwidth]{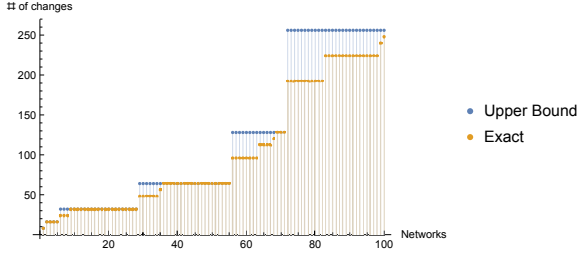}
    \caption{Statistics for the number of changes in the second layer of scale-free Boolean networks. The $x$-axis shows the 100 networks that were randomly generated and the $y$-axis shows the number of changes corresponding to a network in the $x$-axis. In Figure~\ref{fig:errors_layer2_Boolean} we plot the differences between upper bounds and the exact number of changes for these networks.}
    \label{fig:layer2_Boolean}
\end{figure}
\begin{proposition}\label{thm:multistate_not_canal_bound}
Let $\mathbf{F}=(f_1,\ldots,f_n):\mathbb{F}^n\to \mathbb{F}^n$ be a multistate network where $f_t$ is written as in Equation~\ref{eq:layers_del_bound}.
Let $x_s\in supp(P_C)$. The maximum number of transitions in the state space that will change upon deletion of $x_s\to x_t$ is 
\begin{equation}\label{eq:bound_p_C}
p^{n-k_1-\cdots-k_{r}-1}\left(\prod_{i=1}^r~\prod_{j=1}^{k_i}(p-|S_{j,t}|)\right)(p-1).
\end{equation}
\end{proposition}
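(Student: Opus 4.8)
The plan is to adapt the counting argument from the proof of Theorem~\ref{thm:multistate_bound}, treating the non-canalizing variable $x_s$ as if it were a ``layer-$(r+1)$'' variable whose only canalizing value is $0$. By Definition~\ref{def:edge_del_m}, deleting $x_s\to x_t$ replaces $f_t$ by the function obtained from Equation~\ref{eq:layers_del_bound} by substituting $x_s=0$; the transition at an input $(x_1,\dots,x_n)$ changes exactly when $f_t$ at this input differs from $f_t$ at the same input with its $s$-th coordinate reset to $0$. So the task is to bound the number of inputs at which this substitution is not provably a no-op.

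First I would isolate the inputs on which no change can occur. By the uniqueness part of Theorem~\ref{thm:layers_multi}, $x_s$ appears only in $P_C$, so the value of $f_t$ depends on $x_s$ solely through $P_C$. Moreover, if some canalizing variable $x_j$, say in layer $i$, satisfies $x_j\in S_{j,t}$, then $\widetilde{Q}_{S_{j,t}}$ vanishes, hence $M_i^t=0$, hence the whole product $M_1^t\cdots M_r^tP_C$ is $0$ and $f_t$ collapses to a constant (a partial sum of the $B$'s) not involving $P_C$; at such an input, setting $x_s=0$ changes nothing. Likewise, if $x_s=0$ already, the substitution is trivially a no-op. Therefore a change is possible only at inputs for which (a) every canalizing variable $x_j$ of $f_t$, over all $r$ layers, satisfies $x_j\notin S_{j,t}$, and (b) $x_s\neq 0$.

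Next I would count such inputs. Condition (a) leaves $p-|S_{j,t}|$ choices for each canalizing variable $x_j$ in each layer $i$, for a total of $\prod_{i=1}^r\prod_{j=1}^{k_i}(p-|S_{j,t}|)$; condition (b) leaves $p-1$ choices for $x_s$; and the remaining $n-(k_1+\cdots+k_r)-1$ variables, all non-canalizing and distinct from $x_s$, are unconstrained, contributing $p^{\,n-k_1-\cdots-k_r-1}$. Multiplying these gives the bound in Equation~\ref{eq:bound_p_C}. This is the same bookkeeping as in Theorem~\ref{thm:multistate_bound}, the only difference being that $x_s$ contributes the factor $p-1$ rather than $p-R$, reflecting that a non-canalizing variable has no canalizing values beyond the trivial no-op at $0$.

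Finally, for the claim that this count is actually attained, I would exhibit a core polynomial realizing every potential change: take $P_C=x_s+h$, where $h$ is a non-constant polynomial in the core variables other than $x_s$ chosen to have no canalizing variables. Then $x_s$ is non-canalizing in $P_C$ (setting $x_s$ to any constant leaves the non-constant term $h$), and at every input satisfying (a) and (b) one has $f_t(\dots,x_s,\dots)-f_t(\dots,0,\dots)=P_C(\dots,x_s,\dots)-P_C(\dots,0,\dots)=x_s\neq 0$, so all the counted transitions genuinely change. The only step requiring real care is the reduction in the second paragraph — verifying via the nested product structure of Equation~\ref{eq:layers_del_bound} that a canalizing variable landing in its input set genuinely freezes $f_t$ independently of $P_C$ — together with keeping the ``$-1$'' in the exponent correctly assigned to $x_s$; the rest is the routine product count.
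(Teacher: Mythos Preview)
The paper does not supply a proof for this proposition; it is stated and immediately followed by a remark. Your argument is correct and is exactly the natural extension of the counting in Theorem~\ref{thm:multistate_bound}: restrict to inputs where every canalizing variable avoids its canalizing set (so each $M_i^t=1$ and $P_C$ is actually exposed), intersect with $x_s\neq 0$, and leave the remaining $n-d-1$ coordinates free. One small wording point: when some $M_i^t=0$, the value of $f_t$ at that input is not literally ``a partial sum of the $B$'s'' independent of the outer layers --- it still depends on the $M_j^t$ with $j<i$ --- but it \emph{is} independent of $P_C$ and hence of $x_s$, which is all you use. Your sharpness construction $P_C=x_s+h$ is also fine and in fact goes beyond what the proposition asserts (sharpness is only claimed in the subsequent remark); note it tacitly assumes $P_C$ has at least one variable besides $x_s$, which is harmless since a genuinely non-canalizing $P_C$ cannot be univariate.
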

\begin{figure}
    \centering
    \includegraphics[width=.5\textwidth]{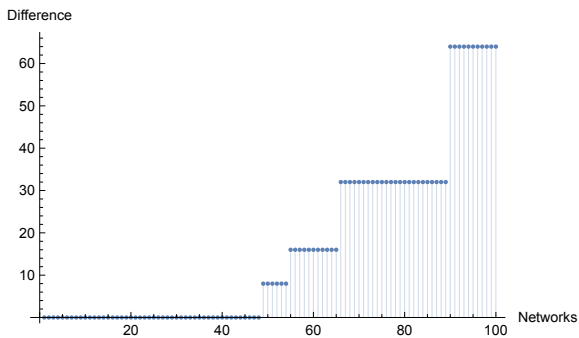}
    \caption{Statistics for the differences between the upper bounds and the exact number of changes for the networks in Figure~\ref{fig:layer2_Boolean}. In about 50\% of the networks the upper bounds match the exact number of changes.}
    \label{fig:errors_layer2_Boolean}
\end{figure}
\begin{remark}
\begin{enumerate}
\item This upper bound is sharp.
\item When $p=2$, the expression reduces to $2^{n-d-1}$, where $d$ is the canalizing depth.
\item If $f$ has no canalizing variables, then the formula in Equation~\ref{eq:bound_p_C} reduces to $p^{n-1}(p-1)$.
\end{enumerate}
\end{remark}

\begin{proposition}
Let $\mathbf{F}=(f_1,\ldots,f_n):\mathbb{F}^n\to \mathbb{F}^n$ be a multistate network where $f_t$ is written as in Equation~\ref{eq:layers_del_bound}.
If $P_C$ is canalizing but not $1$-canalizing with canalizing variable $x_s$ and input set $S_{s,t}$, then 
there are two cases to consider:
\begin{enumerate}
    \item The deletion of $x_s\to x_t$ will result in up to
\begin{equation}\label{eq:bound_can_but_not_1can}
p^{n-k_1-\cdots-k_{r}-1}\left(\prod_{i=1}^r~\prod_{j=1}^{k_i}(p-|S_{j,t}|)\right)(p-R)
\end{equation}
transitions, where 
\begin{displaymath}
R = \left\{ \begin{array}{cc}
 |S_{s,t}| & \textrm{if $0\in S_{s,t}$}\\
 1 & \textrm{if $0\notin S_{s,t}$.}
  \end{array} \right.
\end{displaymath}
    \item Let $x_a\in supp(P_C)$ that is not canalizing. Then 
the maximum number of transitions in the state space that will change upon deletion of $x_a\to x_t$ is 
    \begin{equation}\label{eq:bound_can_but_not_1can1}
    \begin{array}{l}
p^{n-k_1-\cdots-k_{r}-2}\left(\displaystyle\prod_{i=1}^r~\prod_{j=1}^{k_i}(p-|S_{j,t}|)\right)\\
(p-|S_{s,t}|)(p-1).
\end{array}
\end{equation}    
\end{enumerate}
\end{proposition}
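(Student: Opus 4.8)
The plan is to follow the counting strategy already used in the proofs of Theorem~\ref{thm:multistate_bound} and Proposition~\ref{thm:multistate_not_canal_bound}: in each case, identify exactly which inputs $(x_1,\dots,x_n)\in\mathbb{F}^n$ can possibly satisfy $f_t(x_1,\dots,x_n)\neq f_t(x_1,\dots,x_s=0,\dots,x_n)$ (respectively with $x_a$ in place of $x_s$ in the second case), and bound their number as a product of ``fractions of the transition table.'' The common first step is the observation that if any of the $d=k_1+\cdots+k_r$ variables sitting in the canalizing layers $M_1^t,\dots,M_r^t$ of $f_t$ takes a value in its canalizing input set, then the output of $f_t$ is already determined before $P_C$ is reached, hence is insensitive to the variables inside $P_C$, and no change can occur. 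This restricts attention to inputs for which all $d$ layer variables are ``off,'' contributing the factor $\prod_{i=1}^r\prod_{j=1}^{k_i}(p-|S_{j,t}|)$ out of $p^{d}$; on this set $f_t$ collapses to $P_C+(B_1+\cdots+B_r)$, so a change in $f_t$ is exactly a change in $P_C$.

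For Case~1 (deletion of $x_s\to x_t$, where $x_s$ is the canalizing variable of $P_C$), I would split on whether $0\in S_{s,t}$. Since $P_C$ is canalizing in $x_s$, we have $P_C\equiv c$ for some constant $c\in\mathbb{F}$ whenever $x_s\in S_{s,t}$. If $0\in S_{s,t}$, then resetting $x_s$ to $0$ keeps $P_C=c$, so no input with $x_s\in S_{s,t}$ can change, and at most the $p-|S_{s,t}|$ values of $x_s$ outside $S_{s,t}$ can produce a change, giving $R=|S_{s,t}|$. If $0\notin S_{s,t}$, resetting $x_s$ to $0$ lands in the region where $x_s$ is non-canalizing for $P_C$, so $P_C$ need not equal $c$; the only value of $x_s$ guaranteed to produce no change is $x_s=0$ itself, so at most $p-1$ values can change, giving $R=1$. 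Combining with the $p^{n-d-1}$ free choices for the remaining (non-$x_s$) variables of $P_C$ yields Equation~\ref{eq:bound_can_but_not_1can}.

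For Case~2 (deletion of $x_a\to x_t$ with $x_a\in supp(P_C)$ non-canalizing), there is one extra constraint: because $x_s$ is canalizing for $P_C$, any input with $x_s\in S_{s,t}$ has $P_C\equiv c$ and is therefore insensitive to $x_a$, so it cannot change; this forces $x_s$ outside $S_{s,t}$, a factor $(p-|S_{s,t}|)$ out of $p$. As in the standard argument for a non-canalizing input, deleting $x_a\to x_t$ amounts to setting $x_a=0$, and only the $p-1$ inputs with $x_a\neq 0$ can change, a factor $(p-1)$ out of $p$. The remaining $n-d-2$ variables of $P_C$ (all variables except the $d$ layer variables, $x_s$, and $x_a$) are free, contributing $p^{n-d-2}$. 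Multiplying the layer factor, the $x_s$ factor, the $x_a$ factor, and the free factor gives Equation~\ref{eq:bound_can_but_not_1can1}.

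The main obstacle I anticipate is the subcase $0\notin S_{s,t}$ of Case~1: one must argue carefully that $p-1$ (rather than something smaller, such as $p-|S_{s,t}|$) is the right upper bound, i.e.\ that inputs with $x_s\in S_{s,t}$ genuinely can change when $x_s$ is reset to $0$, precisely because the reset pushes the evaluation off the canalizing branch of $P_C$. One should also be careful that ``canalizing but not $1$-canalizing'' is invoked only through the single fact that $P_C$ is constant on $\{x_s\in S_{s,t}\}$; the failure of $1$-canalization plays no further role beyond explaining why $x_s$ is retained inside $P_C$ in the normal form of Theorem~\ref{thm:layers_multi}. Finally, the implicit sharpness claim would be established, as in the earlier remarks, by exhibiting functions for which every counted input actually changes.
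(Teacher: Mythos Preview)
The paper states this proposition without proof, so there is no explicit argument to compare against; the implicit expectation is that the reader adapts the counting strategy from the proofs of Theorem~\ref{thm:same_input_set} and Theorem~\ref{thm:multistate_bound}. Your proposal does exactly that, and the argument is correct in both cases: the reduction to inputs where all layer variables lie outside their canalizing sets, the case split on $0\in S_{s,t}$ versus $0\notin S_{s,t}$ for Case~1, and the extra constraint $x_s\notin S_{s,t}$ in Case~2 are all handled properly. In particular, your treatment of the subcase $0\notin S_{s,t}$---observing that inputs with $x_s\in S_{s,t}$ \emph{can} change because the reset $x_s\mapsto 0$ moves the evaluation off the canalizing branch of $P_C$, so only $x_s=0$ is guaranteed invariant---is exactly the right justification for $R=1$, and it makes explicit a point the paper leaves to the reader.
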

\begin{remark}
\begin{enumerate}
\item This upper bound is sharp.
\item If $P_C$ has more than one canalizing variables (but it is still not 1-canalizing), then the formula in Equation~\ref{eq:bound_can_but_not_1can1} becomes
    \begin{equation}\label{eq:bound_can_but_not_1can2}
    \begin{array}{l}
p^{n-k_1-\cdots-k_{r}-c-1}\left(\displaystyle\prod_{i=1}^r~\prod_{j=1}^{k_i}(p-|S_{j,t}|)\right)\\
\prod_{i=1}^c(p-|S_{s_i,t}|)(p-1),
\end{array}
\end{equation}
where each $x_{s_i}$ is a canalizing variable and $c$ is the number of canalizing variables of $P_C$.
\end{enumerate}
\end{remark}
\begin{figure}
    \centering
    \includegraphics[width=.5\textwidth]{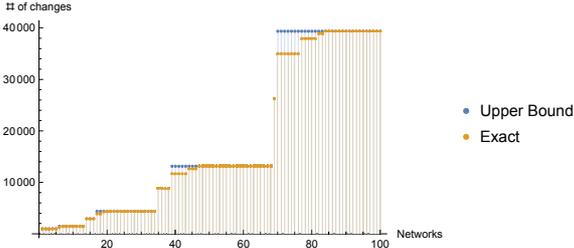}
    \caption{Statistics for the number of changes in the first layer of scale-free multistate networks. The $x$-axis shows the 100 networks that were randomly generated and the $y$-axis shows the number of changes corresponding to a network in the $x$-axis. In Figure~\ref{fig:errors_layer1_multistate} we plot the differences between upper bounds and the exact number of changes for these networks.}
    \label{fig:layer1_multistate}
\end{figure}
\section{Applications}\label{sec:Applications}
To provide further insights into the results presented above and to illustrate the use of the formulas here we
present numerical results for random networks where we compare the exact number of changes to the upper bounds provided by the formulas.

For the next examples, we generated random networks with scale-free structure using the Barabasi-Albert algorithm~\cite{Barabasi:1999aa}.
We note that the Barabasi-Albert algorithm produces undirected edges but for our examples we need directed edges. Thus, for each undirected edge
between $x_i$ and $x_j$, we converted the edge $x_i - x_j$ into either $x_i\rightarrow x_j$ or $x_j\rightarrow x_i$ at random.

\begin{example}[Boolean Case]\label{ex:Boolean_case}
In order to calculate the exact number of changes we use random networks with 10 nodes.
For each network, we selected the node with the maximum in-degree and generated a random partition of its inputs to assign the canalizing layers. The functions of the other nodes were generated at random.

In Figure~\ref{fig:layer1_Boolean} we show statistics for the number of changes in the first layer.
The average maximum in-degree for the networks in Figure~\ref{fig:layer1_Boolean} is $4.14$ ($std=1.07$). The average number of variables in the first layer is $2.61$ ($std=1.5$). The average number of changes in the first layer is $221.28$ ($std=168.363$) and the average upper bound is $259.04$ ($std=201.562$).

In Figure~\ref{fig:errors_layer1_Boolean} we present statistics of the number of changes as well the difference between the exact number of changes and the upper bound provided by the formulas. For these networks, in about 40\% of the cases the upper bounds match the exact number of changes.

In Figure~\ref{fig:layer2_Boolean} we show statistics for the number of changes in the second layer.
The average maximum in-degree for the networks in Figure~\ref{fig:layer2_Boolean} is $4.8$ ($std=1.07$). 
The average number of variables in the first layer is $1.67$ ($std=0.93$). 
The average number of variables in the second layer is $2.13$ ($std=0.75$).
The average number of changes in the second layer is $86.0$ ($std=62.7$) and the average upper bound is $100.64$ ($std=78.88$).

In Figure~\ref{fig:errors_layer2_Boolean} we present statistics of the number of changes as well the difference between the exact number of changes and the upper bound provided by the formulas. For these networks, in about 50\% of the cases the upper bounds match the exact number of changes.

From Figures~\ref{fig:errors_layer1_Boolean} and \ref{fig:errors_layer2_Boolean}, we see that the bounds are slightly more accurate when the edge intervention happens in a less dominant layer. 
\end{example}
\begin{figure}
    \centering
    \includegraphics[width=.5\textwidth]{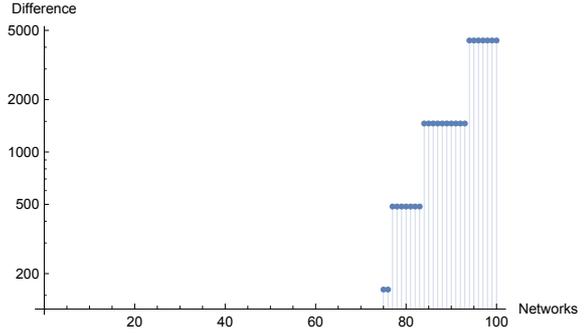}
    \caption{Statistics for the differences between the upper bounds and the exact number of changes for the networks in Figure~\ref{fig:layer1_multistate}. In about 75\% of the networks the upper bounds match the exact number of changes. The vertical axis is in logarithmic scale.}
    \label{fig:errors_layer1_multistate}
\end{figure}
\begin{example}[Multistate Case]\label{ex:Multistate_case}
Here we also use random networks with scale-free structure with $p=3$ and $n=10$ nodes.
For each network, we selected the node with the maximum in-degree and generated a random partition of its inputs to assign the canalizing layers. The functions of the other nodes were generated at random.

In Figure~\ref{fig:layer1_multistate} we show statistics for the number of changes in the first layer.
The average maximum in-degree for the networks in Figure~\ref{fig:layer1_multistate} is $4.05$ ($std=0.88$). The average number of variables in the first layer is $2.28$ ($std=1.16$). The average number of changes in the first layer is $17303.2$ ($std=14739.4$) and the average upper bound is $17792.5$ ($std=15220.6$).

In Figure~\ref{fig:errors_layer1_multistate} we present statistics of the difference between the upper bounds provided by the formulas and the exact number of changes. For these networks, in about 75\% of the cases the upper bounds match the exact number of changes.

In Figure~\ref{fig:layer2_multistate} we show statistics for the number of changes in the second layer.
The average maximum in-degree for the networks in Figure~\ref{fig:layer2_multistate} is $4.84$ ($std=0.94$). 
The average number of variables in the first layer is $1.7$ ($std=0.86$). 
The average number of variables in the second layer is $1.93$ ($std=0.97$).
The average number of changes in the second layer is $3946.68$ ($std=3789.71$) and the average upper bound is $4056.48$ ($std=3969.7$).

\begin{figure}
    \centering
    \includegraphics[width=.5\textwidth]{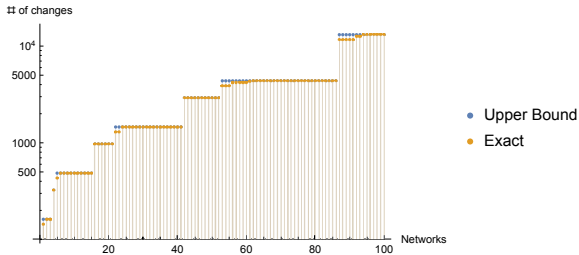}
    \caption{Statistics for the number of changes in the second layer of scale-free networks with $p=3$ and $n=10$ nodes. The $x$-axis shows the 100 networks that were randomly generated and the $y$-axis shows the number of changes corresponding to a network in the $x$-axis. The vertical axis is in logarithmic scale. In Figure~\ref{fig:errors_layer2_multistate} we plot the differences between upper bounds and the exact number of changes for these networks.}
    \label{fig:layer2_multistate}
\end{figure}

In Figure~\ref{fig:errors_layer2_multistate} we present statistics of the difference between the the upper bound provided by the formulas and the exact number of changes. For these networks, in about 80\% of the cases the upper bounds match the exact number of changes.

\begin{figure}
    \centering
    \includegraphics[width=.5\textwidth]{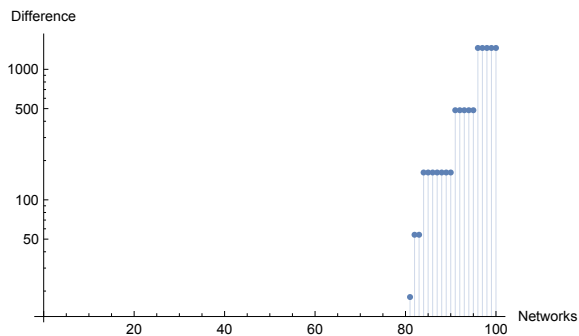}
    \caption{Statistics for the differences between the upper bounds and the exact number of changes for the networks in Figure~\ref{fig:layer2_multistate}. In about 80\% of the networks the upper bounds match the exact number of changes. The vertical axis is in logarithmic scale.}
    \label{fig:errors_layer2_multistate}
\end{figure}

From Figures~\ref{fig:errors_layer1_multistate} and \ref{fig:errors_layer2_multistate}, we see that the bounds are slightly more accurate when the edge intervention happens in a less dominant layer. 
\end{example}
\section{Conclusions}\label{sec:Conlusions}
In this paper we present practical methods for quantifying the global changes that result from an application of an external intervention in the network, which we called the total effect of the intervention. 
We emphasized that, while there are several methods for identifying control targets in discrete networks, there have been very few studies focusing on the total extent of the consequences of applying a certain control action
(beyond satisfying the control objective). This paper contributes methods for measuring the number of changed transitions in the state space upon the application of an edge control in multistate networks. The approach is based on a polynomial normal form description of discrete functions that provides a way of categorizing the inputs of the function and therefore of quantifying their impact on the dynamics of the network.
The main computational cost of our approach is in obtaining the canalizing layers format of the functions which we used to derive our formulas. Once the functions are written in the layers format, it is straightforward to apply the formulas to calculate the upper bound. We note that obtaining the layers format could be a formidable task with exponential complexity on the number of inputs in the worst case but for the type of networks we are studying (i.e. biological networks) we believe that our approach is still practical. We applied our methods to randomly generated multistate models and verified that in many cases the upper bounds provided by our formulas were accurate. We also observed that the upper bounds tend to be more accurate when the edge interventions happen in a less dominant layer.
\textbf{\bibliographystyle{agsm}}
\bibliography{ref_for_control}
\end{document}